\documentclass[sigconf]{acmart}

\usepackage{subcaption}
\usepackage{algorithm}
\usepackage{algorithmicx}
\usepackage{algpseudocode}
\usepackage{multirow}
\usepackage{arydshln}
\usepackage{makecell}
\usepackage{xcolor}
\newtheorem{theorem}{Theorem}
\newcommand{\pr}{\textrm{Pr}}

\AtBeginDocument{%
  }

\copyrightyear{2026}
\acmYear{2026}
\setcopyright{cc}
\setcctype{by}
\acmConference[KDD '26]{Proceedings of the 32nd ACM SIGKDD Conference on Knowledge Discovery and Data Mining V.2}{August 09--13, 2026}{Jeju Island, Republic of Korea}
\acmBooktitle{Proceedings of the 32nd ACM SIGKDD Conference on Knowledge Discovery and Data Mining V.2 (KDD '26), August 09--13, 2026, Jeju Island, Republic of Korea}
\acmDOI{10.1145/3770855.3817759}
\acmISBN{979-8-4007-2259-2/2026/08}

\begin{document}

\title{Scalable Counterfactual Risk Estimation for Rare Events in Longitudinal Data}

\author{Xiaohui Yin}
\authornote{These authors contributed equally to this work.}
\affiliation{%
  \institution{University of Connecticut}
  \city{Storrs}
  \state{Connecticut}
  \country{USA}}

\author{Avijit Mitra}
\authornotemark[1]
\affiliation{%
  \institution{University of Massachusetts Amherst}
  \city{Amherst}
  \state{Massachusetts}
  \country{USA}}

\author{Ying Zhou}
\authornote{Corresponding author.}
\affiliation{%
  \institution{University of Connecticut}
  \city{Storrs}
  \state{Connecticut}
  \country{USA}}
\email{yzhou@uconn.edu}

\author{Kun Chen}
\affiliation{%
  \institution{University of Connecticut}
  \city{Storrs}
  \state{Connecticut}
  \country{USA}}

\author{Hong Yu}
\affiliation{%
  \institution{University of Massachusetts Lowell}
  \city{Lowell}
  \state{Massachusetts}
  \country{USA}}

\renewcommand{\shortauthors}{Xiaohui Yin, Avijit Mitra, Ying Zhou, Kun Chen, \& Hong Yu}

\begin{abstract}
  Estimating the causal effect of time-varying treatments on survival outcomes in large observational studies is computationally demanding, particularly when outcomes are rare. While g-formula–based methods such as the iterative conditional expectation (ICE) estimator provide a principled framework for longitudinal causal inference, they become computationally expensive, especially when bootstrap-based variance estimation is required. In addition, outcome rarity at each time point induces severe class imbalance, leading to instability and convergence issues in logistic regression and related models. To address these challenges, we propose a principled subsampling and reweighting strategy for longitudinal survival data that can be applied to a range of existing causal effect estimators in this setting, including the ICE estimator. The proposed method substantially reduces computational burden while preserving consistency and improving estimation stability in rare-outcome settings. We evaluate the method through simulations and validate it using a large-scale EHR cohort study on social and behavioral determinants of health (SBDH) and suicide risk, demonstrating its effectiveness for modeling rare outcomes in longitudinal data.
\end{abstract}

\begin{CCSXML}
<ccs2012>
   <concept>
       <concept_id>10002951.10003227.10003351</concept_id>
       <concept_desc>Information systems~Data mining</concept_desc>
       <concept_significance>500</concept_significance>
       </concept>
   <concept>
       <concept_id>10010147.10010257</concept_id>
       <concept_desc>Computing methodologies~Machine learning</concept_desc>
       <concept_significance>300</concept_significance>
       </concept>
   <concept>
       <concept_id>10010405.10010444.10010449</concept_id>
       <concept_desc>Applied computing~Health informatics</concept_desc>
       <concept_significance>100</concept_significance>
       </concept>
 </ccs2012>
\end{CCSXML}

\ccsdesc[500]{Information systems~Data mining}
\ccsdesc[300]{Computing methodologies~Machine learning}
\ccsdesc[100]{Applied computing~Health informatics}

\keywords{causal inference, counterfactual learning, rare event, longitudinal data, scalable algorithms}

\maketitle

\section{Introduction}
\label{sec:intro}

Many observational studies in causal inference aim to estimate the causal effect of a time-varying treatment on a survival outcome. Examples include evaluating the impact of a long-term medication regimen on the survival of patients with chronic diseases \citep{hernan2006estimating}, assessing the effect of smoking cessation over time on lung cancer incidence \citep{kenfield2008comparison}, and studying the influence of lifestyle interventions on cardiovascular event-free survival \citep{knowler2002reduction}.

Under the standard causal assumptions of consistency, positivity, and exchangeability given measured confounders, the counterfactual probabilities of the time-to-event outcomes can be identified using the g-formula \citep{robins1986new}. Two widely-used forms of the g-formula are: (1) an expectation weighted by the joint distribution of covariates, treatments, and outcomes, and (2) an iterative conditional expectation over time. A straightforward approach to constructing estimators based on the g-formula is to first estimate each component and then plug these estimates into the g-formula. The estimator based on the first form of the g-formula is known as the \textit{Non-Iterative Conditional Expectation (NICE)} estimator, which typically requires modeling the joint distribution of the confounders, treatments, and outcomes over time. The estimator based on the second representation is the \textit{Iterative Conditional Expectation (ICE)} estimator, which relies on a sequence of conditional expectation models at each time point but avoids specifying the full joint distribution of confounders.

A major challenge in applying both the NICE and ICE estimators in large observational studies is the substantial computational burden, particularly when the outcome of interest is rare \citep{king2001logistic}. For example, in our analysis of a large-scale cohort study examining the effect of social and behavioral determinants of health (SBDH) on suicide with around 130,000 individuals, fitting logistic regression models at each time point is computationally intensive. This difficulty arises primarily from the rarity of suicide and the large sample size, both of which contribute to slow convergence when estimating models for binary outcomes. In addition, because closed-form expressions for standard errors are difficult to derive in this setting, inference typically relies on bootstrap procedures, further increasing the computational burden through repeated resampling and model fitting. These challenges are especially pronounced in longitudinal settings, where models must be estimated repeatedly across time. In some applied environments, such as secure data platforms with limited storage or memory, repeatedly operating on full longitudinal cohorts may also be infeasible, further motivating the need for scalable alternatives. 

To address these challenges, we adopt a subsampling perspective for rare-event modeling, a well-established and cost-effective strategy for reducing computational burden while preserving statistical efficiency \citep{breslow1996statistics,van2008estimation,Wang2020}. In particular, case–control designs can be viewed as a principled form of outcome-dependent subsampling that is especially effective when events are rare. However, existing case–control approaches are typically defined locally at individual event times or risk sets, and are not designed to construct a coherent sampled dataset that respects the repeated, time-indexed structure of longitudinal data with time-varying treatments and covariates. In such settings, sampling decisions at one time point impose structural constraints on subsequent time points through survival, temporal ordering, and within-subject dependence.

We therefore propose longitudinal case–control sampling and reweighting strategies that explicitly account for these cross-time dependencies. Our approach constructs time-specific samples while respecting risk-set structure and longitudinal constraints, enabling computationally efficient learning under rare outcomes without compromising validity. While we illustrate the method through its integration with g-formula–based estimators for time-varying treatment effects, the proposed sampling framework is more general and can be applied to a broad class of longitudinal causal and predictive methods. We demonstrate the effectiveness of the proposed approach through extensive simulation studies and a real-world application involving suicide as a rare outcome, showing substantial reductions in computational cost while maintaining estimation accuracy and stability.

\section{Related Work} \label{sec:lit}
A range of methodological frameworks have been developed to estimate causal effects in longitudinal settings where treatments, covariates, and outcomes evolve over time. These include the g-formula, marginal structural models (MSMs), and structural nested models (SNMs), which are designed to address challenges such as time-varying confounding and to evaluate causal effects under history-dependent treatment strategies \citep{HernanRobins2020,Wen2021,Robins2000,Vansteelandt2014}. All three approaches aim to correct for time-dependent confounding that arises when past exposures influence subsequent covariates, but they differ in how they achieve this goal.
MSMs rely on inverse-probability weighting to construct a pseudo-population in which treatment assignment is independent of past confounders. These models are relatively straightforward to implement using standard regression tools and yield marginal causal effects. However, their performance can be compromised by highly variable or extreme weights, especially in settings with limited overlap, leading to unstable or biased estimates.
SNMs estimate causal effects via g-estimation of blip functions, which quantify the change in the counterfactual outcome induced by deviating from a reference treatment path at each time point, conditional on the observed history. With correctly specified treatment and outcome models, g-estimation attains the semiparametric efficiency bound and yields history-specific causal contrasts. In practice, however, SNMs rely on computationally demanding, sometimes unstable iterative equations, are highly sensitive to misspecification of either model, and yield blip parameters that have limited direct clinical interpretability.

The parametric g-formula models each component of the data-generating process and can be implemented through two algebraically equivalent forms which will be discussed in Section \ref{sec:methodology}. Although it requires correct specification of all sub-models, the g-formula avoids the weight instability inherent to MSMs and the iterative g-estimation burden of SNMs, accommodates both continuous and time-to-event outcomes, and provides clear population-level contrasts for complex static or dynamic treatment strategies. Because of this finite-sample stability, modelling flexibility, and transparent counterfactual interpretation, we use the g-formula to examine how evolving SBDH influence suicide-related outcomes over time.

In addition to these well-established frameworks, subsampling has emerged as an important strategy for improving the computational efficiency of estimation in large-scale data. Rather than analyzing the entire cohort, one can draw informative subsets of the data to substantially reduce computational burden while still enabling valid inference. In rare-event settings, theoretical and empirical results show that retaining all cases while subsampling a sufficiently large number of controls, together with appropriate reweighting, yields estimators that are consistent and, under suitable asymptotic regimes, can achieve the same asymptotic variance as the corresponding full-sample maximum likelihood estimator \citep{Wang2020}. This indicates that, when outcomes are highly imbalanced, most of the statistical information is concentrated in the rare cases, and additional controls contribute diminishing marginal information beyond a certain point. Consequently, subsampling provides a principled, efficiency-preserving approach for scaling g-formula–based methods and related longitudinal estimators to massive datasets where full-sample estimation is computationally prohibitive.

\section{Methodology}\label{sec:methodology}

\paragraph{Study Design}

We consider a longitudinal study with discrete time points \( j = 0, \ldots, T \), where individuals are followed over time. At each time \( j \), covariates \( L_j \) and treatment \( A_j \) are observed, followed by the binary censoring indicator \( C_{j+1} \) and binary event indicator \( Y_{j+1} \). Censoring (\( C_{j+1} = 1 \)) indicates that the individual is no longer observed from time \( j+1 \) onward. This may occur in practice due to loss to follow-up, withdrawal, or administrative censoring at the end of the study period. We define $Y_j$ as the event-at-time indicator: $Y_j=1$ if the event occurs exactly at time $j$. Thus, if an individual experiences the event at time \( k \), then \( Y_k = 1 \) and \( Y_j = 0 \) for all $j < k$. The data are temporally ordered as \( (L_j, A_j, C_{j+1}, Y_{j+1}) \), with \( Y_0 = C_0 = 0 \) and \( \bar{L}_{-1} = \bar{A}_{-1} = \emptyset \) by convention.
As an illustration, an individual who experiences the event at time 2 would have the observed data sequence
\(
(L_{-1}= \emptyset, A_{-1}= \emptyset, C_0=0, Y_0=0, L_0=l_0, A_0=a_0, C_1=0, Y_1=0, L_1=l_1, A_1=a_1, C_2=0, Y_2=1).
\)

We use \( Y_t^g \) to denote the potential outcome at time \( t \) had an individual followed a deterministic treatment strategy \( g \).
A deterministic strategy \( g \) specifies the treatment \( A_j \) to be assigned at each time \( j \) based on the observed history up to that point, \( (\bar{L}_j, \bar{A}_{j-1}) \). 
Our goal is to identify and estimate $E(Y_t^g)$, the mortality risk by time \( t \) if all individuals in the population were to follow strategy \( g \).

\paragraph{G-formula} 
Identification of $E(Y_t^g)$ using the g-formula \citep{robins1986new} relies on three key assumptions: \\
1. \textit{Consistency:} If an individual’s observed treatment history matches the strategy \( g \) up to time \( j \), i.e., \( \bar{A}_j = \bar{A}_j^g \), then their observed covariates and outcomes also equal the counterfactuals: \( \bar{L}_j = \bar{L}_j^g \) and \( \bar{Y}_{j+1} = \bar{Y}_{j+1}^g \).    \\
2. \textit{Exchangeability:} At each time \( j \), future counterfactual outcomes are independent of treatment and censoring decisions, conditional on past covariate and treatment history and being uncensored and event-free: 
    \[
    (Y_{j+1}^g, \ldots, Y_T^g) \perp (A_j, C_{j+1}) \mid \bar{L}_j = \bar{l}_j, \bar{A}_{j-1} = \bar{a}_{j-1}^g, C_j = Y_j = 0.
    \]
3. \textit{Positivity:} If a covariate and treatment history occurs with positive probability under the observed data, then the probability of receiving the treatment assigned by strategy \( g \) and remaining uncensored at the next time must also be positive:
\begin{align*}
    & f_{\bar{L}_j, \bar{A}_{j-1}, C_j, Y_j}(\bar{l}_j, \bar{a}_{j-1}^g, 0, 0) > 0 \\
    \Rightarrow\;& 
    f_{A_j, C_{j+1} \mid \bar{L}_j, \bar{A}_{j-1}, C_j, Y_j}(a_j^g, 0 \mid \bar{l}_j, \bar{a}_{j-1}^g, 0, 0) > 0.
\end{align*}

Under these assumptions, the counterfactual risk \( E(Y_T^g) \) can be identified via the g-formula:
\begin{align}
&E(Y_T^g) = \sum_{\bar{l}_{T-1}} \sum_{t=1}^T P(Y_t=1 \mid Y_{t-1}=C_t=0, \bar{L}_{t-1}=\bar{l}_{t-1}, \bar{A}_{t-1}=\bar{a}_{t-1}^g)  \nonumber  \\
&\prod_{s=0}^{t-1} P(Y_s=0 \mid  Y_{s-1}  =C_s=0, \bar{L}_{s-1}=\bar{l}_{s-1}, \bar{A}_{s-1}=\bar{a}_{s-1}^g) \label{eq:NICE} \\
&\qquad f(l_s \mid Y_s=C_s=0, \bar{L}_{s-1}=\bar{l}_{s-1}, \bar{A}_{s-1}=\bar{a}_{s-1}^g). \nonumber
\end{align}
Equation (\ref{eq:NICE}) is known as the non-iterative conditional expectations (NICE) representation. 
Alternatively, using iterated conditional expectations (ICE), \( E(Y_T^g) \) can be identified as:
\begin{align}\label{eq:ICE}
\begin{split}
E(Y_T^g) = E_{f_{L_0}}\Bigl(E_{f_{Y_1}}\Bigl[Y_1 + \cdots E_{f_{Y_{T-1}}}\bigl\{ 
Y_{T-1}(1-Y_{T-2}) 
+ E_{f_{L_{T-1}}}\big[ \\ E_{f_{Y_T}}\{ Y_T(1-Y_{T-1}) \mid 
\bar{Y}_{T-1},C_T=0, 
\bar{L}_{T-1}, \bar{A}_{T-1}=\bar{A}^g_{T-1}\}\mid \\
\bar{Y}_{T-1}, C_{T-1}=0,\bar{L}_{T-2}, \bar{A}_{T-2}=\bar{A}^g_{T-2} \big] \mid 
\bar{Y}_{T-2},C_{T-1}=0, \bar{L}_{T-2}, \\\bar{A}_{T-2}=\bar{A}^g_{T-2}\bigr\}
...\mid C_1 = 0, A_0=A_0^g,L_0\Bigr]  \Bigr).
\end{split}
\end{align}

In this paper, we focus on the ICE formula (\ref{eq:ICE}). However, our methods are also applicable to the NICE estimator, as demonstrated in our simulation studies.  Based on (\ref{eq:ICE}),  an algorithm for estimating $E(Y^g_T)$ is presented in Algorithm \ref{alg:ice_full}. If the parametric models in lines 1 and 6 in Algorithm~\ref{alg:ice_full} are correctly specified, this estimator will be consistent.

\begin{algorithm}[htbp]
\caption{ICE Estimator for Estimating $E(Y_T^g)$}
\label{alg:ice_full}
\begin{algorithmic}[1]
\State Regress $Y_T$ on $\bar{L}_{T-1}$ and $\bar{A}_{T-1}$ among individuals with $Y_{T-1} = C_T = 0$, and estimate parameter $\theta_{T,T-1}$.
\State Obtain predicted values $\hat{h}^g_{T,T-1}$ from 
\(
E(Y_T \mid Y_{T-1}=C_T=0, \bar{L}_{T-1}, \bar{A}_{T-1}=\bar{A}^g_{T-1}; \hat{\theta}_{T,T-1})
\)
by fixing $\bar{A}_{T-1} = \bar{A}^g_{T-1}$ for all individuals with $Y_{T-1}=C_{T-1}=0$. Set $q=2$.
\While{$q \leq T$}
    \State Let $k = T - q$.
    \State Define
    \[
    \hat{Q}^g_{T,k+1} = 
    \begin{cases}
    \hat{h}^g_{T,k+1} & \text{if } Y_{k+1}=0,\\
    1 & \text{if } Y_{k+1}=1.
    \end{cases}
    \]
    \State Regress $\hat{Q}^g_{T,k+1}$ on $\bar{L}_k$ and $\bar{A}_k$ among individuals with $Y_k = C_{k+1} = 0$, and estimate parameter $\theta_{T,k}$.
    \State Obtain predicted values $\hat{h}^g_{T,k}$ from
    \(
    E(\hat{Q}^g_{T,k+1} \mid Y_k=C_{k+1}=0, \bar{L}_k, \bar{A}_k=\bar{A}^g_k; \hat{\theta}_{T,k})
    \)
    by fixing $\bar{A}_k = \bar{A}^g_k$ among individuals with $Y_k=C_k=0$.
    \State Set $q=q+1$.
\EndWhile
\State Average $\hat{h}^g_{T,0}$ over all individuals to estimate $E(Y_T^g)$.
\end{algorithmic}
\end{algorithm}

\paragraph{An Illustrative Example} 

In Algorithm~\ref{alg:ice_full}, models are estimated at each time point using all available samples. In large cohorts with rare outcomes, the resulting excess of controls leads to substantial computational cost, particularly when models must be fit repeatedly across multiple time points and within bootstrap procedures.
To improve computational efficiency, we propose to apply Algorithm~\ref{alg:ice_full} to a longitudinal case--control subsample that retains all cases while subsampling controls at each time point, reducing computational burden without sacrificing estimation reliability.

To illustrate the key ideas, we consider a simplified setting with two time points, $t=0,1$, in the absence of censoring. Suppose the total number of individuals is $N$, among whom \( C \) experience the event. Let $c_t$ denote the number of individuals who experience the event at time $t$, so that $C = c_0+c_1$. At each time point $t$, we include all cases occurring at $t$ and randomly sample $Jc_t$ controls with replacement from individuals who remain event-free through time $t$, where $J$ denotes a fixed sampling ratio. The resulting case--control subsample therefore contains $(J+1)(c_0+c_1)$ observations. If an individual is sampled at both time points, each appearance is treated as a distinct observation.

Suppose our goal is to estimate the expectation of a covariate $X$. Using the full sample, a natural estimator is $\hat{E}(X)=N^{-1}\sum_{i=1}^N X_i$. 
We now show how a longitudinal case–control subsample with appropriate reweighting can be constructed to recover this quantity in expectation.
The original sample is partitioned into three mutually exclusive groups: 
\( G_0 = \{ i : Y_0 = 1 \} \) (those who die at \( t=0 \)); 
\( G_1 = \{ i : Y_1=1, Y_0=0 \} \) (those who die at \( t=1 \)); and 
\( G_2 = \{ i : Y_1=0 \} \) (those alive at \( t=1 \)).
Each individual belongs to exactly one group. 
Table \ref{tab:t1} summarizes the structure of the case–control subsample. 
The column ``weight'' gives the weight associated with each category, the column ``group'' lists the groups included in the category, and the column ``count'' indicates the number of times a person in this category is counted. Using time 0 control as an example, the probability that an individual who has not experienced the event at time 0 is selected into the subsample is $(Jc_0)/(N-c_0)$. Therefore, the count is the product of the weight and this selection probability. 
\begin{table}[htbp]
  \caption{Subsample breakdown with two time points and no censoring}
  \vspace{2mm}
  \label{tab:t1}
    \centering
    \begin{tabular}{lcccc}\toprule
        & size & weight & group & count \\\midrule
        time 0 case & $c_0$ & $w_0$ & $G_0$ & $w_0\cdot 1$\\
        time 0 control & $Jc_0$ & $m_0$ & $G_1,\, G_2$ & $m_0\cdot \frac{Jc_0}{N-c_0}$\\
        time 1 case & $c_1$ & $w_1$ & $G_1$ & $w_1\cdot 1$ \\
        time 1 control & $Jc_1$ & $m_1$ & $G_2$ & $m_1\cdot \frac{Jc_1}{N-c_0-c_1}$ \\\bottomrule
    \end{tabular}
\end{table}

Because individuals may be sampled at multiple time points, unbiased estimation requires balancing their expected total weighted contribution across groups.
An individual in \(G_0\) is always included and thus contributes \(w_0 \cdot 1\). An individual in \(G_1\) may appear as a time 0 control with probability \(Jc_0/(N-c_0)\) and as a time 1 case with probability 1, so its expected contribution is \(w_1 \cdot 1 + m_0 \cdot Jc_0/(N-c_0)\). An individual in \(G_2\) may appear as a time 0 control with probability \(Jc_0/(N-c_0)\) and as a time 1 control with probability \(Jc_1/(N-c_0-c_1)\), giving an expected contribution of \(m_1 \cdot Jc_1/(N-c_0-c_1) + m_0 \cdot Jc_0/(N-c_0)\). In the full sample, each individual is counted exactly once. To ensure that the case--control subsample is representative of the full sample, we require that the expected contribution per individual be equal across groups. This balancing condition leads to
\[
w_0 = w_1 + m_0 \frac{Jc_0}{N-c_0} = m_1\frac{Jc_1}{N-c_0-c_1}+m_0 \frac{Jc_0}{N-c_0},
\]
which can be rewritten as 
\begin{align*}
w_1 = m_1 \frac{Jc_1}{N-c_0-c_1}, \quad
w_0-w_1 = m_0\frac{Jc_0}{N-c_0}.
\end{align*}
Since there are 2 equations with 4 parameters $(w_0,w_1,m_0,m_1)$, we have 2 degrees of freedom.
One convenient normalization is to set $w_0=1/N$, which ensures that the total weight across the subsample equals one: \(c_0w_0 + c_1w_1 + Jc_1m_1 + Jc_0m_0 = 1\).
If we further let $w_1 = 1/N$, then $m_0=0$ and $m_1 = (N-c_0-c_1)/(JNc_1)$. It is also possible to set $w_1$ to other values and obtain corresponding values of $m_0$ and $m_1$. Additional examples of weight combinations are provided in the Appendix \ref{appd:additional_weights}.

When estimating a statistical model, such as a logistic regression, parameter estimation is typically carried out by solving an estimating equation. In many cases, the estimating equation is from the maximum likelihood estimation process by setting the gradient of the log-likelihood to zero. 
We now show that the weighting strategy introduced above naturally extends to such estimating equations. We denote $O^*$ as the whole population, and $O=(O_{0c},O_{0m},O_{1c},O_{1m})$ as the longitudinal case--control sample, where subscripts indicate time point $(0,1)$ and case/control status $(c,m)$. Suppose the target parameter is defined by the population-level estimating equation $E^*_0\{S_0(O^*)\}=0$, where $E^*_0$ denotes expectation with respect to the true distribution of the whole population, and $S_0$ is the original estimating equation. We define the weighted estimating equation based on the case–control sample as 
$$S(O) = c_1w_1S_0(O_{1c})+Jc_1m_1 S_0(O_{1m})+c_0w_0 S_0(O_{0c})+ Jc_0m_0 S_0(O_{0m}).$$
Then, under the proposed sampling and weighting scheme, it holds that
\begin{align}\label{eq:estimating_equation}
E\{S(O)\}=w_0 N E_0^*\{S_0(O^*)\}=0,
\end{align}
where the expectation $E$ is taken with respect to the sampling distribution induced by the longitudinal case--control design.
The proof of (\ref{eq:estimating_equation}) is relegated to the Appendix \ref{appd:proof_eq}. 
Equation (\ref{eq:estimating_equation}) implies that, with appropriate weighting, the population estimating equation \( E_0^*\{S_0(O^*)\} = 0 \) if and only if the weighted estimating equation satisfies \( E\{S(O)\} = 0 \). Consequently, the target model parameters can be consistently estimated by solving the empirical estimating equation \( \hat{E}\{S(O)\} = 0 \) using the case--control subsample.

\paragraph{General Results with Censoring} Now we extend the setting to $k$ time points, $t = 0, 1, ..., k-1$, and allow for censoring. Let $c_t$ denote the number of individuals who experience the event at time $t$, and let $s_t$ denote the number of individuals who are
censored at time $t$. We allow the sampling ratios for controls and censored individuals, denoted by $J_t$ and $K_t$ respectively, to vary across time points. 
At each time point $t$, we include all cases occurring at $t$. In addition, we sample $K_tc_t$ from those who are censored at time $t$ except for the last time point, and sample $J_tc_t$ controls from individuals who have not experienced the event prior to or at time $t$. Censored individuals are not sampled at the last time point $t=k-1$, because their outcome \( Y_{k-1} \) is unobserved, which prevents their use in fitting logistic or other classification models in the first step of Algorithm \ref{alg:ice_full}. At earlier time points, however, censored individuals can still contribute to estimating counterfactual risks and may therefore be included. Under this design, the case--control subsample has a size of $\sum^{k-1}_{t=0} c_t+ \sum^{k-1}_{t=0}J_tc_t + \sum^{k-2}_{t=0}K_tc_t$.

The population can be partitioned into $2k+1$ mutually exclusive groups: $G_0, G_1, \ldots, G_{2k-1},\allowbreak G_{2k}$, where $G_{2i}$ is the group of people who are censored at time $i$ for $i=0,\ldots,k-1$, $G_{2i+1}$ is the group of people who experience the event at time $i$ for $i=0,\ldots,k-1$, and $G_{2k}$ is the group of people who remain event-free and uncensored through time $k-1$. A detailed description of the resulting subsample for $k$ time points is provided in Table~\ref{tab:t2}, with the ``count'' column constructed in the same manner as in Table \ref{tab:t1}. We use $w_t$ to denote the weights for cases at time $t$, $m_t$ for sampled controls, and $\ell_t$ for sampled censored individuals.

\begin{table*}[!t]
  \caption{Subsample breakdown with $k$ time points and censoring}
  \vspace{2mm}
  \label{tab:t2}
  \centering
  \begin{tabular}{lcccc}\toprule
    & size & weight & group & count \\\midrule
    time 0 censored & $K_0c_0$ & $\ell_0$ & $G_0$ & $\ell_0\cdot \frac{K_0c_0}{s_0}$\\
    time 0 case & $c_0$ & $w_0$ & $G_1$ & $w_0\cdot 1$\\
    time 0 control & $J_0c_0$ & $m_0$ & $G_2,\ldots,G_{2k}$ & $m_0\cdot \frac{J_0c_0}{N-c_0-s_0}$\\
    time 1 censored & $K_1c_1$ & $\ell_1$ & $G_2$ & $\ell_1\cdot \frac{K_1c_1}{s_1}$\\
    time 1 case & $c_1$ & $w_1$ & $G_3$ & $w_1\cdot 1$ \\
    time 1 control & $J_1c_1$ & $m_1$ & $G_4,\ldots,G_{2k}$ & $m_1\cdot \frac{J_1c_1}{N-c_0-s_0-c_1-s_1}$ \\
    $\qquad\vdots$ & $\vdots$ & $\vdots$ & $\vdots$ & $\vdots$\\
    time $k\!-\!1$ censored & $K_{k-1}c_{k-1}$ & $\ell_{k-1}$ & $G_{2k-2}$ & $\ell_{k-1}\cdot \frac{K_{k-1}c_{k-1}}{s_{k-1}}$\\
    time $k\!-\!1$ case & $c_{k-1}$ & $w_{k-1}$ & $G_{2k-1}$ & $w_{k-1}\cdot 1$ \\
    time $k\!-\!1$ control & $J_{k-1}c_{k-1}$ & $m_{k-1}$ & $G_{2k}$ & $m_{k-1}\cdot \frac{J_{k-1}c_{k-1}}{N-\sum_{j=0}^{k-1}c_j-\sum_{j=0}^{k-1}s_j}$ \\\bottomrule
  \end{tabular}
\end{table*}

Requiring that each individual in groups $G_0$ through $G_{2k}$ have the same expected total weighted contribution leads to the following system of equations:
\begin{align}\label{eq:weights}
\begin{split}
w_t &= \ell_t \frac{K_tc_t}{s_t}, \quad t = 0,\ldots,k-1, \\
w_{k-1} &= m_{k-1} \frac{J_{k-1}c_{k-1}}{N-\sum_{j=0}^{k-1}c_j - \sum_{j=0}^{k-1}s_j}, \\
w_t-w_{t+1} &= m_t\frac{J_tc_t}{N-\sum_{j=0}^{t}c_j - \sum_{j=0}^{t}s_j}, \quad t = 0,\ldots,k-2.
\end{split}
\end{align}
Since there are $2k$ equations with $3k$ parameters $(w_0,\ldots,w_{k-1},\allowbreak m_0,\ldots,m_{k-1},\ell_0,\ldots,\ell_{k-1})$, we have $k$ degrees of freedom.
If we let $w_0=1/N$, then the total weight across the subsample equals one:
\begin{align*}
\sum_{t=0}^{k-1}c_tw_t + \sum_{t=0}^{k-1} J_tc_tm_t  + \sum_{t=0}^{k-1}K_tc_t\ell_t = w_0N = 1.
\end{align*}
Letting \( K_t c_t = s_t \) for \( t = 0, \ldots, k-1 \), we have \( \ell_t = w_t \). This implies that, at each time point \( t \), if all censored individuals are included, they should be assigned the same weights as the cases at time \( t \). We adopt this design in our paper. Under this choice, the system reduces to \( k \) equations in \( 2k \) unknowns, yielding infinitely many valid weighting schemes. In this paper, we focus on one particular choice:
$$ \ell_0 = w_0 = \frac{c_0}{N-s_0}, \, m_0 = \frac{1}{J}\frac{N-c_0-s_0}{N-s_0}, \, \ell_t = w_t = m_t = 0, $$ for $t=1,\ldots,k-1$. It is straightforward to verify that this set of weights satisfies (\ref{eq:weights}). This choice corresponds to a simple weighting scheme that includes only individuals from time \( t = 0 \).
Here, $t=0$ is a local index used in the complete-data subproblem and should not be confused with the first time point of the original longitudinal study. 
When applying Algorithm~\ref{alg:ice_full} within the longitudinal case--control framework, each global time point at which a model is fitted is relabeled as the local baseline $t=0$ through a purely notational reindexing of time. 
In addition, because an individual who experiences the event at a later time point may be sampled as a control at earlier time points, the same individual may contribute multiple observations with different roles and receive different weights across time. The resulting ICE estimator based on the constructed case–control subsample is summarized in Algorithm~\ref{alg:case_control}.

To connect the local weighting derivation with Algorithm~\ref{alg:case_control}, note that Algorithm~\ref{alg:case_control} iterates backward over the global time points $k=T-1,\ldots,0$. At each step, the current global time point $k$ is relabeled as the local baseline time $t=0$. Under the simple choice $\ell_t=w_t=m_t=0$ for $t\geq 1$, only the cases, censored individuals, and sampled controls at the current global time point $k$ contribute to the estimating equation. The weights $w_k,\ell_k,m_k$ in Algorithm~\ref{alg:case_control} are therefore the local weights $w_0,\ell_0,m_0$ recomputed at each step.

\begin{algorithm}[ht]
\caption{Case--Control Sampling and Estimation Procedure with Censoring}
\label{alg:case_control}
\begin{algorithmic}[1]
\State Generate a case--control sample:
    \For{each time point $t$}
        \State Keep all $c_t$ cases at time $t$ (i.e., those who die at time $t$).
        \If{$t < k-1$}
            \State Randomly draw $J_tc_t$ controls from those alive at time $t$ with replacement.
            \State Assign weight $w_t$ to each case at time $t$.
            \State Assign weight $w_t$ to each censored individual at time $t$.
            \State Assign weight $m_t$ to controls sampled at time $t$.
        \ElsIf{$t = k-1$}
            \State Randomly draw $J_{k-1}c_{k-1}$ controls from those alive (excluding censored individuals) at time $k-1$ with replacement.
            \State Assign weight $w_{k-1}$ to each case at time $k-1$.
            \State Assign weight $m_{k-1}$ to controls sampled at time $k-1$.
        \EndIf
    \EndFor
\State Fit a weighted regression of $Y_T$ on $\bar{L}_{T-1}$ and $\bar{A}_{T-1}$ among individuals with $Y_{T-1}=C_T=0$ using the associated weights, and estimate parameter $\theta_{T,T-1}$.
\State Obtain predicted values $\hat{h}^g_{T,T-1}$ from
\[
E(Y_T \mid Y_{T-1}=C_T=0, \bar{L}_{T-1}, \bar{A}_{T-1}=\bar{A}^g_{T-1}; \hat{\theta}_{T,T-1})
\]
by fixing $\bar{A}_{T-1} = \bar{A}^g_{T-1}$ among individuals with $Y_{T-1}=C_{T-1}=0$. Set $q=2$.
\While{$q \leq T$}
    \State Let $k = T - q$.
    \State Define
    \[
    \hat{Q}^g_{T,k+1} =
    \begin{cases}
    \hat{h}^g_{T,k+1} & \text{if } Y_{k+1}=0, \\
    1 & \text{if } Y_{k+1}=1.
    \end{cases}
    \]
    \State Fit a weighted regression of $\hat{Q}^g_{T,k+1}$ on $\bar{L}_k$ and $\bar{A}_k$ among individuals with $Y_k=C_{k+1}=0$ using the associated weights, and estimate parameter $\theta_{T,k}$.
    \State Obtain predicted values $\hat{h}^g_{T,k}$ from
    \[
    E(\hat{Q}^g_{T,k+1} \mid Y_k=C_{k+1}=0, \bar{L}_k, \bar{A}_k=\bar{A}^g_k; \hat{\theta}_{T,k})
    \]
    by fixing $\bar{A}_k = \bar{A}^g_k$ among individuals with $Y_k = C_k = 0$.
    \State Set $q = q + 1$.
\EndWhile
\State Compute the weighted average of $\hat{h}^g_{T,0}$ using the generated weights to estimate $E(Y_T^g)$.
\end{algorithmic}
\end{algorithm}

\begin{theorem}\label{thm:1}
The estimator from Algorithm~\ref{alg:case_control} is consistent for \(E(Y_T^g)\) under the same assumptions that ensure the consistency of the ICE estimator in Algorithm~\ref{alg:ice_full}.
\end{theorem}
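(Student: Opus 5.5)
We would prove Theorem~\ref{thm:1} by backward induction over the global time points $k=T-1,\ldots,0$. At each step we show that the weighted estimating equation solved in Algorithm~\ref{alg:case_control} has the same population root as the unweighted one in Algorithm~\ref{alg:ice_full}. Consistency then follows from standard M-/Z-estimation arguments combined with the continuous mapping theorem.

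The basic tool is identity (\ref{eq:estimating_equation}), extended to the $k$-time-point setting with censoring. For any fixed estimating function $S_0$ and any weights satisfying (\ref{eq:weights}), we show $E\{S(O)\} = w_0 N\, E_0^*\{S_0(O^*)\}$. The argument is the same group-by-group computation as for (\ref{eq:estimating_equation}). We write $E\{S(O)\}$ as a sum over the groups $G_0,\ldots,G_{2k}$. In each group, the expected weighted count of an individual is read off the ``count'' column of Table~\ref{tab:t2}, and the equations (\ref{eq:weights}) make this count equal to $w_0$ for every group. For the particular choice $\ell_0=w_0=c_0/(N-s_0)$, $m_0=(N-c_0-s_0)/\{J(N-s_0)\}$, with all later weights zero, the verification reduces to the local time point: cases and censored individuals are included with probability one, and each control has expected count $m_0 J_0c_0/(N-c_0-s_0)$.

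For the final step, $k=T-1$, we apply this identity to the score of the outcome regression of $Y_T$ on $(\bar L_{T-1},\bar A_{T-1})$ in the risk set $Y_{T-1}=C_T=0$, after relabeling $T-1$ as the local time $0$. The weighted subsample score then has expectation proportional to the full-cohort score, so both have the same zero $\theta^*_{T,T-1}$. By identifiability and the usual regularity conditions assumed for Algorithm~\ref{alg:ice_full} (compact parameter space, uniform LLN for the weighted empirical score as $c_t\to\infty$ with $J_t$ fixed), we get $\hat\theta_{T,T-1}\to\theta^*_{T,T-1}$. Continuity of the link then gives $\hat h^g_{T,T-1}\to h^g_{T,T-1}$ uniformly on the covariate support.

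For the inductive step, we assume $\hat h^g_{T,k+1}$ is consistent. The pseudo-outcome $\hat Q^g_{T,k+1}$ then converges to $Q^g_{T,k+1}$. We split the weighted score at step $k$ into two parts: the score evaluated at the true pseudo-outcome, handled by the identity above, and a remainder. The remainder is bounded by the weighted sample mean of $|\hat Q-Q|$ times a bounded derivative, and it is $o_p(1)$ because the total weight is one and $\hat h$ converges uniformly. Hence $\hat\theta_{T,k}\to\theta^*_{T,k}$. At $k=0$, the final weighted average of $\hat h^g_{T,0}$ is again an instance of the weighted identity with $S_0 = h-\mu$, so it converges to $E_{f_{L_0}}\{h^g_{T,0}(L_0)\}=E(Y^g_T)$ by (\ref{eq:ICE}).

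The main obstacle is the uniform LLN for the weighted sums. The weights $m_0$ are data-dependent ratios of counts, the controls are drawn with replacement, and the same individual may reappear across steps with different roles. We would handle this by conditioning on the full cohort. Given the cohort, the subsample average is a sum of i.i.d. draws, so a conditional LLN applies. We then use $c_0/N$, $s_0/N$, and the sampling fractions converging in probability, together with Slutsky's theorem, to replace the random weights by their limits.
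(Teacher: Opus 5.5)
Your proposal is correct and follows the paper's route. Both arguments rest on the balance-equation identity $E\{S(O)\}=w_0N\,E_0^*\{S_0(O^*)\}$, applied at each backward ICE step after relabeling the current time point as local $t=0$, so the weighted and unweighted estimating equations share the same root. Your backward induction with the plug-in remainder for $\hat Q^g_{T,k+1}$, the Z-estimation argument, and conditioning on the cohort to handle the count-dependent weights make explicit what the paper's appendix takes for granted.

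One small slip: with the chosen weights the total weight is $w_0N=Nc_0/(N-s_0)$, not one. Your remainder bound should therefore be stated for the score normalized by the total weight (equivalently, the weighted mean), which changes nothing else in the argument.
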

The proof of Theorem \ref{thm:1} is in Appendix \ref{appd:thm_proof}.

\paragraph{Practical Guidance on Weighting and Sampling Ratios}
Our framework admits a family of valid weighting schemes that yield consistent estimation as long as the weights satisfy the balance equations~(\ref{eq:weights}); however, different choices of weights and sampling ratios can influence efficiency. Because closed-form variance formulas for either the NICE or ICE g-formula implementation are prohibitively complex, analytic efficiency comparisons across weighting choices are difficult, so we follow the standard practice in applied g-formula work and obtain standard errors via nonparametric bootstrap resampling \citep{young2011comparative,keil2014parametric,Wen2021}. For the sampling ratios, we recommend a pragmatic procedure. One may begin with a moderate ratio, such as five controls per case, and then increase the time-specific ratio $J_t$ adaptively when events become sparse in later periods, so that each regression is fitted with a reasonable effective sample size. In applications, we recommend running a short pilot bootstrap over two or three candidate ratios to assess the marginal reduction in standard error relative to the added computational cost. This empirical tuning provides a practical way to balance statistical precision and computational efficiency, and in our numerical studies it preserves much of the precision of the full-data estimator while delivering substantial speed-ups.

\section{Numerical Examples}

\subsection{Simulated Longitudinal Data}\label{sec:sim}
We conducted simulation studies to compare the computational cost and estimation efficiency of the proposed subsampling estimator with its complete-data counterpart. The control-to-case sampling ratios were set at $1:5$, $1:10$, and $1:20$. 
Each dataset contained $N=30,000$ individuals followed over six time points. Baseline covariates \(L_b\) included four categorical variables, and at each time point \(t\), the data consisted of ten binary time-varying covariates \(L_t \in \mathbb{R}^{10}\), a binary treatment indicator \(A_t\), a binary censoring indicator \(C_t\), and a death indicator \(Y_t\). The observed data for each individual were $(L_b, L_0, A_0, C_1, Y_1, \dots, L_5, A_5, C_6, Y_6)$.
A first-order (lag-1) dependence structure was assumed for covariates, censoring, and outcomes, and all variables were generated at the individual level using logistic regression models, following the design of \citet{Wen2021}.

The details of data generation are as follows. Baseline covariates $ {L}_b$ were generated from multinomial distributions with the probability vectors $(1,2)$, $(1,1,1,1,1,1)$, $(5,2,1,1,4)$, and $(2,3,1,4)$, respectively; $ {L}_0\sim \mathrm{Ber}(0.5)$, and $A_0 \sim  \mathrm{Ber}\bigl(\mathrm{expit}( {L}_0^\top  {\eta})\bigr)$. Moreover, for $t\ge 1$,
\begin{align*}
C_t &\sim \mathrm{Ber}\bigl(\mathrm{expit}(-6 + A_{t-1} +  {\gamma}^\top {L}_{t-1})\bigr), \\
Y_t &\sim \mathrm{Ber}\bigl(\mathrm{expit}(-6 + 2 A_{t-1} +  {\beta}^\top {L}_{t-1})\bigr),\\
 {L}_t &\sim \mathrm{Ber}\bigl(\mathrm{expit}( {\eta}A_{t-1} +  {M} {L}_{t-1})\bigr), \\
A_t &\sim \mathrm{Ber}\bigl(\mathrm{expit}(A_{t-1} +  {\alpha}^\top  {L}_{t})\bigr),
\end{align*}
where the entries of the coefficient vectors $ {\alpha}$, $ {\beta}$, $ {\gamma}$, $ {\eta}$, and the matrix $ {M}$ were independently generated from a standard normal distribution and then fixed throughout the simulations. This setting yields a prevalence rate of about $1\%$.
We were interested in estimating the counterfactual risks under two deterministic treatment strategies: always treated and never treated, i.e., $\bar{A}_5=(1,1,1,1,1,1)$ and $\bar{A}_5=(0,0,0,0,0,0)$. 

We simulated 100 independent datasets under the same parameter configuration. For each dataset, we performed patient-level bootstrap resampling with \(B = 100\) replicates and applied both Complete and Case--control versions of the NICE and ICE estimators with logistic regression. 
We also explored more computationally intensive machine learning approaches with superlearner, an ensemble framework that combines multiple candidate algorithms to optimize predictive performance. 
For each estimator, we computed the bootstrap mean and standard deviation of the estimated target quantity. For the ICE method with logistic regression, we followed the R package \texttt{gfoRmula} \citep{mcgrath2020gformula} and implemented our case--control version. Superlearner models were implemented using the R package \texttt{SuperLearner} \citep{polley2019package} with base learners supporting weighted samples and probabilistic outcomes, including logistic regression and XGBoost.

\paragraph{Computation Time}
Computation times for logistic regression and superlearner were summarized in Table \ref{tab:sim:summary_time} and Table~\ref{tab:sim:summary_time_sl}. Each entry is the average of 10,000 analyses, that is, 100 bootstrap resamples applied to each of 100 independently simulated data sets. For the case--control version we report both (i) the total runtime, which includes construction of the subsample plus model fitting, and (ii) the model-fitting time alone, allowing the additional cost of the sampling step to be isolated. Each bootstrap resample and model fitting was conducted on a Linux HPC using a single node, equipped with either an Intel Skylake or AMD Epyc processor (Epyc64 or Epyc128 architecture). Each task was allocated 8 CPU cores and 4 GB of memory. The analyses were performed using R version 4.4.1, loaded via the module environment.

\begin{table}[htbp]
  \centering \scriptsize
  \caption{Simulation: Summary of the computation time (in secs) with logistic regression; standard deviations in parentheses}
  \label{tab:sim:summary_time}
  \begin{tabular}{lllll}
    \toprule
    & NICE ($\bar A= \bar 0$) & ICE ($\bar A= \bar 0$) & NICE ($\bar A = \bar 1$) & ICE ($\bar A = \bar 1$)\\
    \midrule
    Complete & 11.79 (2.27) & 4.56 (0.95) & 11.83 (2.32) & 4.26 (0.84)\\
    \addlinespace
    \multicolumn{5}{l}{(\textit{model-fitting time})} \\
    Case--control ($J=20$) & 6.61 (1.90) & 1.78 (0.57) & 6.64 (2.00) & 1.72 (0.73)\\
    Case--control ($J=10$) & 5.83 (1.74) & 1.17 (0.60) & 5.83 (2.17) & 1.14 (0.73)\\
    Case--control ($J=5$) & 5.89 (1.34) & 0.83 (0.49) & 5.90 (1.54) & 0.80 (0.54)\\
    \addlinespace
    \multicolumn{5}{l}{(\textit{total runtime})} \\ 
    Case--control ($J=20$) & 8.09 (2.66) & 2.60 (1.20) & 8.11 (2.76) & 2.55 (1.46)\\
    Case--control ($J=10$) & 6.91 (2.34) & 1.94 (1.22) & 6.94 (2.99) & 1.94 (1.47)\\
    Case--control ($J=5$) & 6.77 (1.75) & 1.54 (1.01) & 6.79 (1.95) & 1.50 (1.01)\\
    \bottomrule
  \end{tabular}
\end{table}

\begin{table}[htbp]
  \centering
  \caption{Simulation: Summary of the computation time (in mins) with superlearner; standard deviations in parentheses}
  \label{tab:sim:summary_time_sl}
  \begin{tabular}{lrr}
    \toprule
    & ICE ($\bar A = \bar 0$) & ICE ($\bar A = \bar 1$)\\
    \midrule
    Complete & 63.32 (6.19) & 62.97 (6.32)\\
    \addlinespace
    \multicolumn{3}{l}{(\textit{model-fitting time})} \\
    Case--control ($J=20$) & 5.42 (0.75) & 5.58 (0.79)\\
    Case--control ($J=10$) & 3.12 (0.51) & 3.21 (0.52)\\
    \addlinespace
    \multicolumn{3}{l}{(\textit{total runtime})} \\
    Case--control ($J=20$) & 5.44 (0.76) & 5.60 (0.80)\\
    Case--control ($J=10$) & 3.14 (0.52) & 3.23 (0.53)\\
    \bottomrule
  \end{tabular}
\end{table}

The computation time in Table~\ref{tab:sim:summary_time} and Table~\ref{tab:sim:summary_time_sl} referred to the average runtime for a single subsample run. However, because we used the bootstrap to estimate standard errors, the total computation time for one Monte Carlo run scaled linearly with the number of bootstrap replications. As expected, the overall computation time increased considerably when using superlearner compared to the logistic regression results, due to the added complexity of ensemble learning. Our subsampling approach achieved a dramatic reduction in computation time relative to the complete-data implementation. These results highlighted the practical efficiency and scalability of our method, particularly in scenarios involving complex models or large-scale datasets where full-sample estimation becomes computationally demanding.

\paragraph{Risk Estimates}

Table~\ref{tab:sim:summary_ice_nice_est} presents the estimates of the ICE and NICE estimators along with their bootstrap standard errors. The results indicate that increasing \(J\) from 5 to 20 reduces the standard errors, bringing them closer to those from the complete-data scenario. The boxplots for the bootstrap mean across the 100 simulated datasets were given in Figure~\ref{fig:sim:boxplot_risk}. For both the ICE and NICE estimators, the subsampling versions yield results comparable to their complete-data counterparts.

\begin{figure}[htbp]
  \centering
  \begin{subfigure}[t]{0.5\textwidth}
    \centering
    \includegraphics[width=\linewidth]{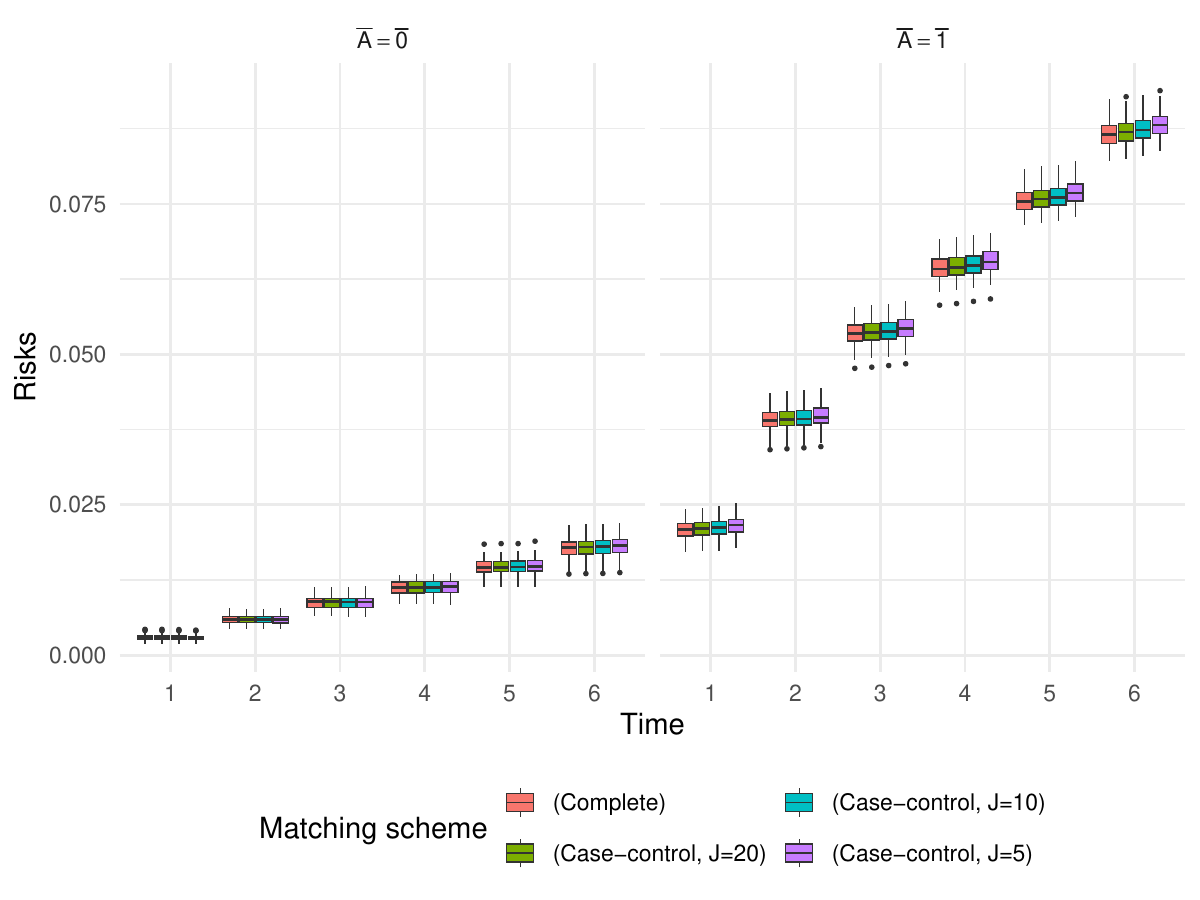}
    \caption{ICE}
  \end{subfigure}

  \begin{subfigure}[t]{0.5\textwidth}
    \centering
    \includegraphics[width=\linewidth]{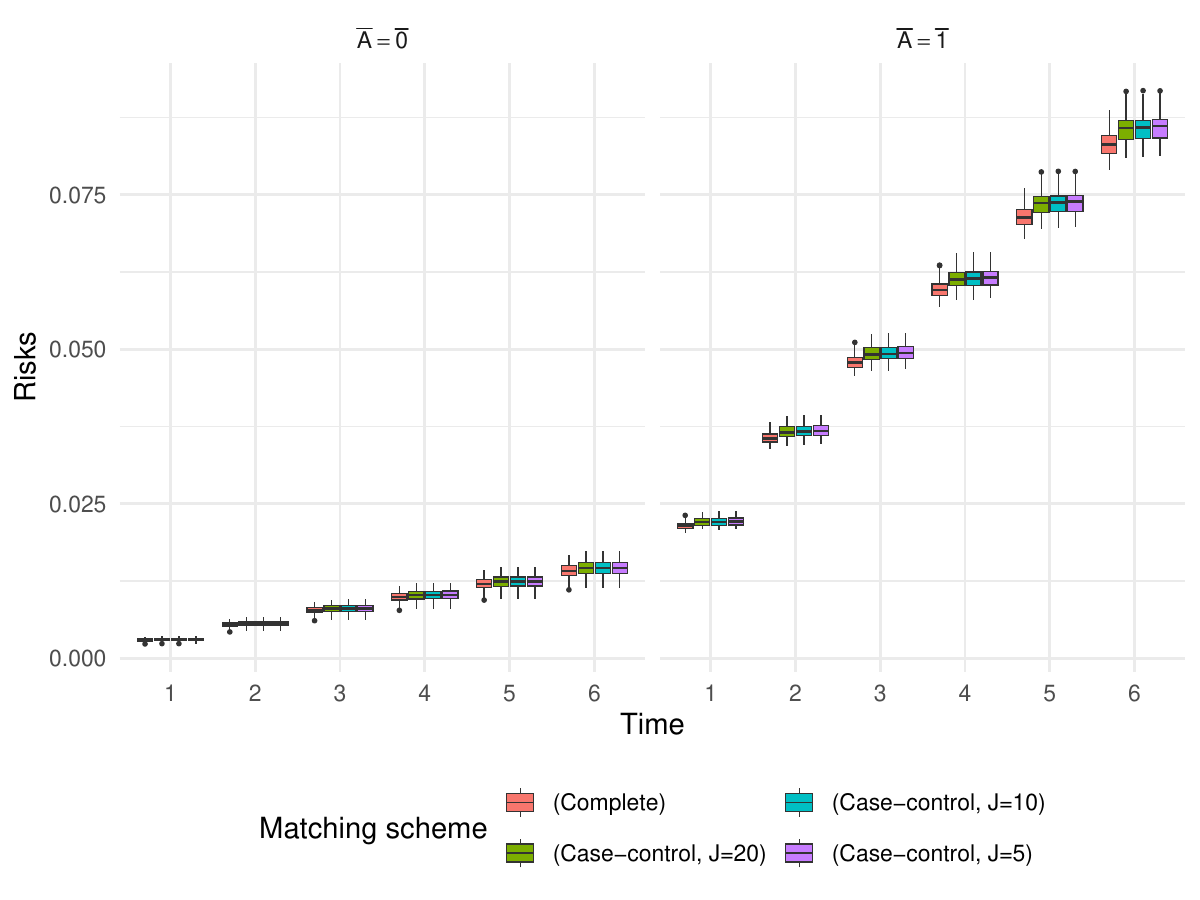}
    \caption{NICE}
  \end{subfigure}%
  \caption{Simulation: Risk estimates for always or never exposed to treatment (low-prevalence scenario).}
  \label{fig:sim:boxplot_risk}
\end{figure}

\begin{table*}[!t]
  \centering \small
  \caption{Simulation: Summary of the average of the ICE and NICE estimates based on bootstrap mean (multiplied by 100), with average of bootstrap SE in parentheses}
  \label{tab:sim:summary_ice_nice_est}
  \begin{tabular}{rllll|llll}
    \toprule
    & \multicolumn{4}{c}{\textbf{ICE}} & \multicolumn{4}{c}{\textbf{NICE}} \\
    & \multicolumn{3}{c}{Case--control}& Complete & \multicolumn{3}{c}{Case--control} & Complete \\
    Time    & ($J=5$) & ($J=10$) & ($J=20$) & & ($J=5$) & ($J=10$) & ($J=20$) & \\
    \midrule
    & \multicolumn{8}{c}{$\bar A = \bar 1$}\\
    1 & 2.15 (0.19) & 2.11 (0.17) & 2.09 (0.15) & 2.08 (0.14) & 2.22 (0.10) & 2.21 (0.09) & 2.21 (0.08) & 2.15 (0.08)\\
    2 & 3.97 (0.21) & 3.94 (0.19) & 3.93 (0.18) & 3.91 (0.17) & 3.69 (0.13) & 3.68 (0.12) & 3.67 (0.12) & 3.57 (0.11)\\
    3 & 5.44 (0.22) & 5.40 (0.21) & 5.38 (0.20) & 5.36 (0.19) & 4.96 (0.16) & 4.95 (0.15) & 4.94 (0.15) & 4.80 (0.14)\\
    4 & 6.55 (0.24) & 6.49 (0.22) & 6.46 (0.21) & 6.43 (0.20) & 6.17 (0.20) & 6.16 (0.19) & 6.15 (0.18) & 5.97 (0.17)\\
    5 & 7.69 (0.26) & 7.62 (0.24) & 7.59 (0.23) & 7.55 (0.22) & 7.39 (0.23) & 7.38 (0.22) & 7.37 (0.22) & 7.15 (0.20)\\
    6 & 8.82 (0.27) & 8.74 (0.25) & 8.70 (0.24) & 8.66 (0.23) & 8.61 (0.27) & 8.59 (0.26) & 8.58 (0.25) & 8.33 (0.23)\\
    \midrule
    & \multicolumn{8}{c}{$\bar A = \bar 0$}\\
    1 & 0.29 (0.05) & 0.29 (0.05) & 0.30 (0.05) & 0.30 (0.04) & 0.31 (0.03) & 0.31 (0.03) & 0.31 (0.03) & 0.30 (0.03)\\
    2 & 0.58 (0.08) & 0.59 (0.08) & 0.59 (0.07) & 0.59 (0.07) & 0.56 (0.05) & 0.56 (0.05) & 0.56 (0.05) & 0.55 (0.05)\\
    3 & 0.87 (0.11) & 0.87 (0.10) & 0.87 (0.10) & 0.87 (0.10) & 0.80 (0.07) & 0.80 (0.07) & 0.80 (0.07) & 0.78 (0.07)\\
    4 & 1.13 (0.14) & 1.13 (0.13) & 1.13 (0.12) & 1.13 (0.12) & 1.02 (0.09) & 1.02 (0.09) & 1.02 (0.09) & 0.99 (0.09)\\
    5 & 1.47 (0.17) & 1.46 (0.16) & 1.46 (0.15) & 1.45 (0.14) & 1.24 (0.11) & 1.24 (0.11) & 1.24 (0.11) & 1.20 (0.10)\\
    6 & 1.82 (0.19) & 1.80 (0.18) & 1.79 (0.17) & 1.78 (0.16) & 1.45 (0.13) & 1.45 (0.13) & 1.45 (0.13) & 1.41 (0.12)\\
    \bottomrule
  \end{tabular}
\end{table*}

\paragraph{Higher Prevalence Scenarios}
The results above focus on the low-prevalence setting, where the event prevalence is approximately 1\%. To assess robustness across rarity levels, we also considered two additional scenarios that differ from the low-prevalence configuration only in the intercept of the $Y_t$ model: a medium-prevalence setting with prevalence approximately 2.5\% (intercept $=-5$) and a high-prevalence setting with prevalence approximately 6\% (intercept $=-4$). 
The medium-prevalence results lead to conclusions qualitatively similar to those in the low-prevalence setting. In particular, the subsampling-based estimators remain substantially faster than the complete-data estimator, as shown in Table~\ref{tab:sim:summary_time_medium}, and the risk-estimate boxplots closely track those from the complete-data analysis, as shown in Figure~\ref{fig:sim:boxplot_risk_medium}.
The high-prevalence results, where the computational advantage of subsampling diminishes as expected, are reported in Appendix~\ref{appd:more_sim}.

\begin{table}[htbp]
  \centering \scriptsize
  \caption{Simulation: Summary of the computation time for the medium-prevalence scenario (in secs); standard deviations in parentheses}
  \vspace{2mm}
  \label{tab:sim:summary_time_medium}
  \begin{tabular}{lrrrr}
    \toprule
    & NICE, $\bar{A}=\bar{0}$ & ICE, $\bar{A}=\bar{0}$ & NICE, $\bar{A}=\bar{1}$ & ICE, $\bar{A}=\bar{1}$\\
    \midrule
    Complete & 11.26 (1.80) & 4.22 (1.03) & 11.28 (1.97) & 3.91 (1.00)\\
    \addlinespace
    \multicolumn{5}{l}{(\textit{model-fitting time})} \\
    Case--control ($J=20$) & 8.27 (2.23) & 3.02 (0.90) & 8.30 (2.32) & 2.85 (0.89)\\
    Case--control ($J=10$) & 6.92 (1.96) & 2.04 (0.86) & 6.89 (1.97) & 1.95 (1.01)\\
    Case--control ($J=5$)  & 6.16 (1.59) & 1.40 (0.47) & 6.15 (1.33) & 1.35 (0.47)\\
    \addlinespace
    \multicolumn{5}{l}{(\textit{total runtime})} \\
    Case--control ($J=20$) & 9.92 (2.78) & 3.89 (1.26) & 9.91 (2.75) & 3.74 (1.28)\\
    Case--control ($J=10$) & 8.48 (2.62) & 2.94 (1.53) & 8.50 (2.93) & 2.86 (1.75)\\
    Case--control ($J=5$)  & 7.35 (2.08) & 2.04 (0.91) & 7.33 (1.79) & 1.98 (0.86)\\
    \bottomrule
  \end{tabular}
\end{table}

\begin{figure}[htbp]
    \centering
    \begin{subfigure}[t]{0.5\textwidth}
        \centering
        \includegraphics[width=\linewidth]{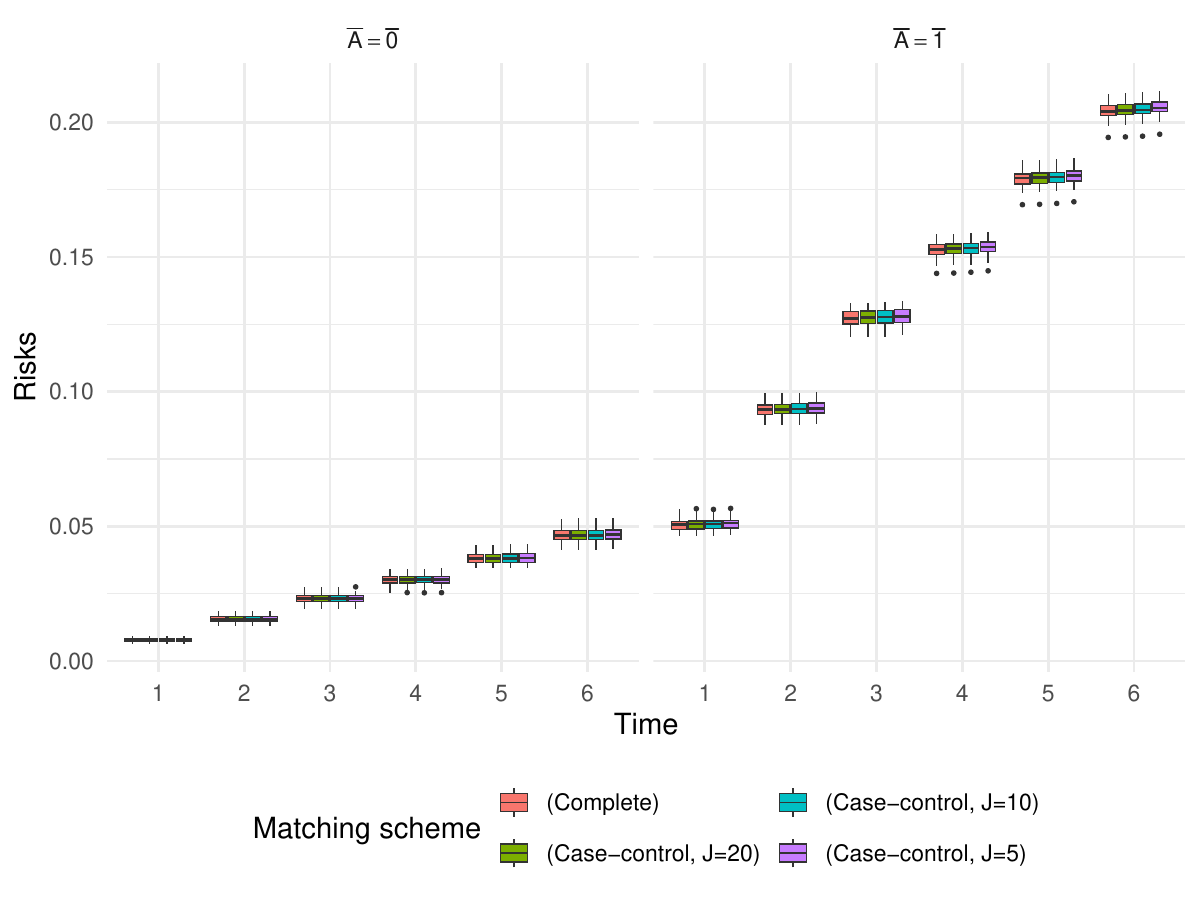}
        \caption{ICE}
    \end{subfigure}
    \begin{subfigure}[t]{0.5\textwidth}
        \centering
        \includegraphics[width=\linewidth]{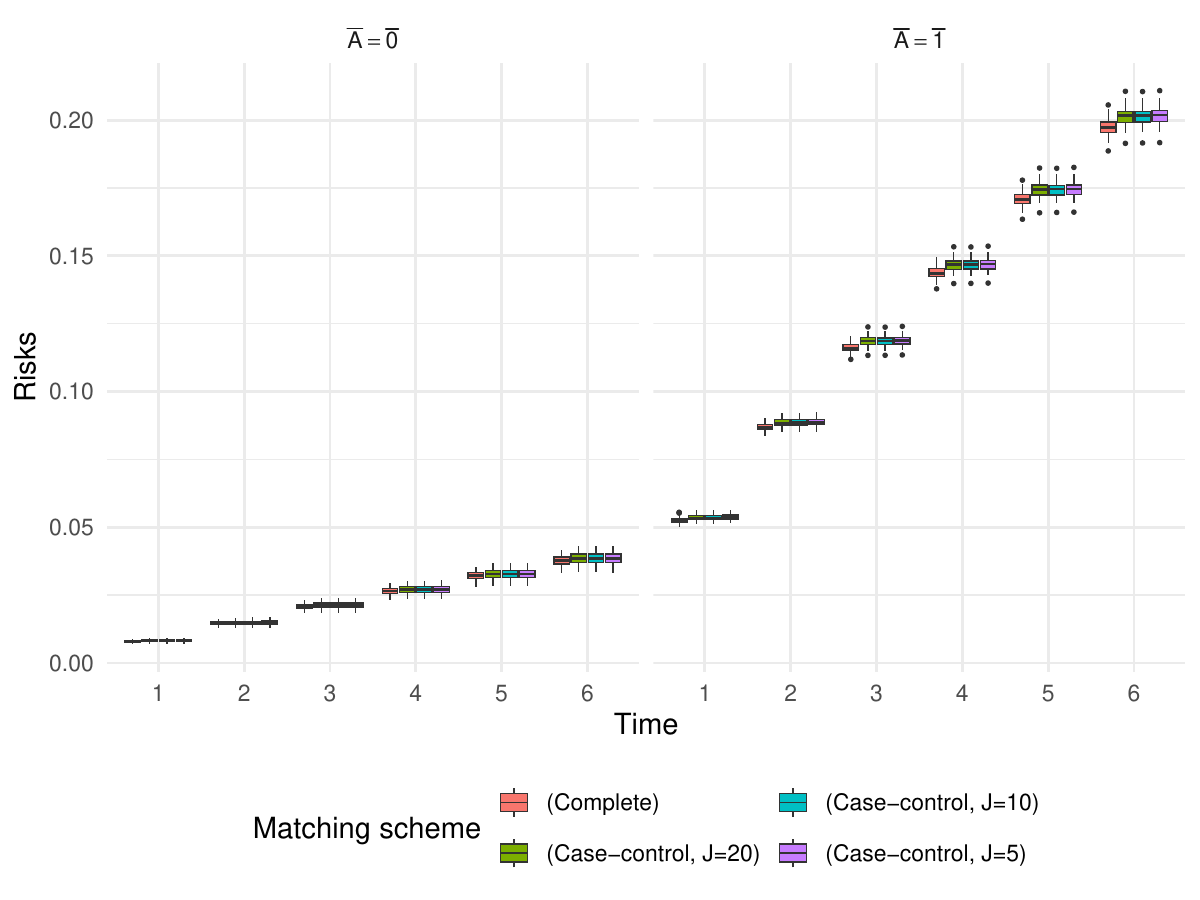}
        \caption{NICE}
    \end{subfigure}%
    \caption{Simulation: Risk estimates for always or never exposed to treatment (medium-prevalence scenario).}
    \label{fig:sim:boxplot_risk_medium}
\end{figure}

\subsection{Real-World Longitudinal Data}
We applied the proposed subsampling ICE estimators to a real-world cohort of 127,399 U.S. veterans who were discharged from short-term psychiatric hospital admissions at Veterans Health Administration (VHA) hospitals between January 1, 2016, and December 31, 2017. Patients were followed until December 31, 2018 or the event of interest, whichever occurred first. 
Death by suicide was the event of interest, and presence of any SBDH \citep{Bompelli2021} was treated as the treatment $A$.
Time-fixed baseline covariates ($L_b$) include race, gender, age groups\footnote{Although age changes over time, this is deterministic and hence we consider this as a time-fixed covariate.} and marital status; baseline summary statistics for the full cohort are provided in Table~\ref{tab:va:base_cohort_stats} in Appendix~\ref{sbdh_list}. Time-varying covariates $(L_t)$ included 17 clinical comorbidities and 7 mental health disorders. 
For model fitting, we excluded the initial 6-month window, during which no suicide deaths occurred, and analyzed the subsequent five consecutive 6-month windows. The resulting restricted analysis cohort contained 125,928 individuals with $331$ suicide cases.
Interval-specific counts are reported in Table~\ref{tab:va:summary_table_va}. Following \cite{mitra2023associations}, we extracted SBDH information from both structured data (International Classification of Diseases (ICD) codes, stop codes) and unstructured data (clinical notes, using a fine-tuned transformer-based deep-learning model \citep{vaswani2017attention}). More about SBDH, clinical, and mental health disorders are available in Appendix \ref{sbdh_list}.

Given the low incidence rate of suicide in the restricted analysis cohort, we adopted an adaptive sampling ratio across different time points. For each case at time point from $1$ to $5$, we sampled $30$, $30$, $50$, $80$, and $80$ control patients, respectively.

\begin{table}[htbp]
  \centering\small
  \caption{Application: Interval-specific summary of the restricted analysis cohort ($N=125{,}928$ veterans, $331$ suicide cases across the five analyzed 6-month intervals)}
  \vspace{2mm}
  \label{tab:va:summary_table_va}
  \begin{tabular}{ccccc}
    \toprule
    Time & Total \# & \# (\%) of Cases & \# of Controls & \# (\%) of Censored\\
    \midrule
    1 & 125,928 & $104\ (0.08\%)$ & 124,351 & $1{,}473\ (1.17\%)$\\
    2 & 124,351 & $87\ (0.07\%)$  & 122,673 & $1{,}591\ (1.28\%)$\\
    3 & 122,673 & $76\ (0.06\%)$  & 102,047 & $20{,}550\ (16.75\%)$\\
    4 & 102,047 & $44\ (0.04\%)$  & 76,537  & $25{,}466\ (24.96\%)$\\
    5 & 76,537  & $20\ (0.03\%)$  & 48,008  & $28{,}509\ (37.25\%)$\\
    \bottomrule
  \end{tabular}
\end{table}
Table \ref{tab:application_time} reports computation times, showing that the subsampling ICE estimator runs in roughly one-quarter of the time needed for the complete-data version. 
Table \ref{tab:application_risks} shows the estimated risks. The point estimates are similar for both methods, but the subsampling approach has a larger standard error at the last time point.

\begin{table}[htbp]
\centering
    \caption{Application: Computation time of ICE method with complete or case--control data (in secs); standard deviations in parentheses}
    \vspace{2mm}
    \label{tab:application_time}
    \begin{tabular}{lrr}\toprule
      & \multicolumn{1}{c}{$\bar{A}=\bar{0}$} & \multicolumn{1}{c}{$\bar{A}=\bar{1}$}\\\midrule
    Complete & $24.66\ (11.98)$ & $21.25\ (13.35)$\\
    Case--control (\textit{total runtime}) & $5.73\ (1.77)$ & $5.70\ (1.83)$\\
    Case--control (\textit{model-fitting time}) & $3.64\ (1.63)$ & $3.58\ (1.77)$\\\bottomrule
    \end{tabular}
\end{table}

\begin{table}[htbp]
\centering 
\caption{Application: Risk estimates from 100 bootstrap samples (standard deviations in parentheses); all values multiplied by 100}
\vspace{2mm}
\label{tab:application_risks}
\begin{tabular}{c rrrr}\toprule
& \multicolumn{2}{c}{$\bar{A}=\bar{0}$} & \multicolumn{2}{c}{$\bar{A}=\bar{1}$}\\\cmidrule(lr){2-3}\cmidrule(lr){4-5}
Time & \multicolumn{1}{c}{Complete} & \multicolumn{1}{c}{Case--control} & \multicolumn{1}{c}{Complete} & \multicolumn{1}{c}{Case--control}\\\midrule
1 & $0.12\ (0.02)$ & $0.12\ (0.02)$ & $0.07\ (0.01)$ & $0.07\ (0.01)$\\
2 & $0.19\ (0.02)$ & $0.20\ (0.03)$ & $0.14\ (0.01)$ & $0.15\ (0.02)$\\
3 & $0.24\ (0.03)$ & $0.25\ (0.03)$ & $0.23\ (0.02)$ & $0.24\ (0.02)$\\
4 & $0.27\ (0.03)$ & $0.28\ (0.03)$ & $0.31\ (0.02)$ & $0.33\ (0.02)$\\
5 & $0.29\ (0.03)$ & $0.31\ (0.05)$ & $0.36\ (0.02)$ & $0.40\ (0.06)$\\\bottomrule
\end{tabular}
\end{table}

\section{Discussion} \label{sec:discussion}

In this paper, we develop an outcome-dependent subsampling framework for longitudinal counterfactual risk estimation under time-varying treatments in rare-event settings. When outcomes are rare and datasets are large, longitudinal causal estimators require repeated model fitting on highly imbalanced risk sets across multiple time points, rendering full-cohort estimation computationally prohibitive. By retaining all events and strategically subsampling non-events at each time point, the proposed approach directly alleviates this computational bottleneck. 

The proposed method relies on the same identification assumptions as the underlying longitudinal g-formula estimator. Therefore, unmeasured confounding or outcome-model misspecification may still bias the resulting estimates. The subsampling step, however, is designed to reduce computational cost and, with appropriate weighting, does not change the target estimand.

Our theoretical analysis shows that, under the stated assumptions and balance
conditions on the weights, the proposed longitudinal subsampling estimator is
consistent for the target counterfactual risks.
While we do not derive closed-form variance expressions or formally characterize efficiency relative to full-cohort estimators, existing theory for rare-event subsampling suggests that much of the statistical information is concentrated in the cases, with additional controls providing diminishing marginal contributions \citep{Wang2020}. Consistent with this intuition, our numerical experiments show that the proposed subsampling strategy achieves substantial computational gains without a commensurate increase in empirical standard errors across a range of sampling ratios.

In practice, bootstrap resampling is widely used to estimate standard errors, often with \(B=1000\) or more replications. In this regime, even a modest reduction in the runtime of a single analysis can translate into substantial cumulative savings. For example, in our VHA application, the runtime of one ICE fit decreased from 24.66 seconds under the complete-data analysis to 5.73 seconds under the case--control subsampling design for $\bar A=\bar 0$. Thus, $B=1000$ bootstrap replications would take approximately 6.9 hours under the complete-data analysis, compared with approximately 1.6 hours under the case--control design. Comparable savings were observed for $\bar A=\bar 1$.
This advantage becomes even more significant in online or routinely updated analyses, where models must be refitted repeatedly as new data arrive. By substantially reducing the cost of each iteration while preserving consistency, the proposed framework offers a scalable alternative to full-cohort estimation in such settings. 

The framework is also flexible: the subsample construction and weighting scheme are not unique, allowing different designs to be tailored to specific analytic objectives. For instance, one may choose whether to include censored individuals in the subsample, or whether to sample cases and controls from future time points. As long as the resulting weights satisfy the balance equations~(\ref{eq:weights}),  the estimator remains consistent under the stated assumptions; choices such as sampling with or without replacement, including or excluding censored individuals, and varying the sampling ratio primarily affect finite-sample variance, stability, and computation. This flexibility naturally raises an important direction for future research: how to select weighting and subsampling strategies to improve efficiency. Prior work on optimal sampling strategies in logistic regression models (e.g., \citealp{fithian2014local, wang2021nonuniform}) provides useful theoretical guidance.

From a practical standpoint, the iterative fitting required in longitudinal estimators can propagate estimation error across time, particularly in later periods. This issue can be amplified under severe outcome imbalance, where logistic or related classification models may become unstable or affected by separation. Penalized methods, such as Firth's logistic regression, can be used at each step to stabilize the nuisance-model fits. More generally, the subsampling framework is compatible with flexible machine learning nuisance estimators, which may improve predictive and reduce sensitivity to simple parametric model specifications in high-dimensional or complex settings.  These considerations also motivate doubly robust extensions, such as longitudinal TMLE \citep{vanderlaan2012ltmle} or AIPTW \citep{bang2005doubly}. In principle, the same subsampling weights could be incorporated into the corresponding influence-function-based estimating equations, although a formal study of such extensions is left for future work.

\begin{acks}
Research reported in this study was in part supported by the National Institutes of Health under award numbers R01MH125027, R01AG080670, and R01DA056470, and by the VA Health Systems Research under award number I01HX003711. The content is solely the responsibility of the authors and does not necessarily represent the official views of the National Institutes of Health and the US Department of Veterans Affair.
\end{acks}


\appendix

\section{Additional Weight Combinations}\label{appd:additional_weights}
Below we give 4 sets of possible weights for the illustrative example with 2 time points without censoring:

(1) Let $w_0=w_1=1$, we have
$$
S(O)=c_0w_0S(O_{0c}) + c_1w_0S(O_{1c})+(N-c_0-c_1)w_0S(O_{1m}),
$$
and $m_0 = 0, m_1 = \frac{N-c_0-c_1}{Jc_1}$.

(2) Let $w_1=m_1=0$, we have
$$
S(O) = c_0w_0 S(O_{0c})+(N-c_0)w_0S(O_{0m}),
$$
and $w_0 = 1$, $m_0 = \frac{N-c_0}{Jc_0}$.

(3) Let $m_0=m_1=1/J$, we have
$$
S(O)=c_1w_1S(O_{1c}) + c_1S(O_{1m})+c_0w_0 S(O_{0c})+c_0S(O_{0m}),
$$
where $w_1 = \frac{c_1}{N-c_0-c_1}$, $w_0 = \frac{c_1}{N-c_0-c_1} + \frac{c_0}{N-c_0}$.

(4) Let $m_0 = \frac{N-c_0}{JN}$, $m_1 = \frac{N-c_0-c_1}{J(N-c_0)}$, then $w_0 = \frac{c_0}{N}+\frac{c_1}{N-c_0}$, $w_1=\frac{c_1}{N-c_0}$,

\section{Proof of Equation (\ref{eq:estimating_equation})} \label{appd:proof_eq}
\begin{proof}
Let
\begin{align*}
S(O) &= c_1w_1S(O_{1c})+Jc_1m_1 S(O_{1m})+c_0w_0 S(O_{0c})+ Jc_0m_0 S(O_{0m})\\
&=c_1w_1S(O_{1c}) + (N-c_0-c_1)w_1S(O_{1m})+c_0w_0 S(O_{0c})\\
& \qquad +(w_0-w_1)(N-c_0)S(O_{0m}),
\end{align*}
then
{\small
\allowdisplaybreaks 
\begin{align*}
&E\{S(O)\}\\
&=c_1 w_1\int S_0(O^*)f(O^*\mid \text{die at }t=1) dO^* \\
&\quad +Jc_1 m_1 \int S_0(O^*)f(O^*\mid \text{alive at }t=1) dO^* \\
&\quad + c_0w_0 \int S_0(O^*)f(O^*\mid \text{die at }t=0) dO^* \\
&\quad + Jc_0m_0 \int S_0(O^*)f(O^*\mid \text{alive at }t=0) dO^* \\
&= \frac{c_1w_1}{\pr(\text{die at }t=1)}\int S_0(O^*)f(O^*,\text{die at }t=1) dO^* \\
&\quad + \frac{Jc_1m_1}{\pr(\text{alive at }t=1)}\int S_0(O^*)f(O^*, \text{alive at }t=1) dO^* \\
&\quad + \frac{c_0w_0}{\pr(\text{die at }t=0)}\int S_0(O^*)f(O^*,\text{die at }t=0) dO^* \\
&\quad + \frac{Jc_0m_0}{\pr(\text{alive at }t=0)}\int S_0(O^*)f(O^*, \text{alive at }t=0) dO^* \\
&= c_1w_1\frac{N}{c_1}\int S_0(O^*)f(O^*,\text{die at }t=1) dO^* \\
&\quad + Jc_1m_1\frac{N}{N-c_0-c_1}\int S_0(O^*)f(O^*, \text{alive at }t=1) dO^* \\
&\quad + c_0w_0\frac{N}{c_0}\int S_0(O^*)f(O^*,\text{die at }t=0) dO^* \\
&\quad+ Jc_0m_0\frac{N}{N-c_0}\int S_0(O^*)f(O^*, \text{alive at }t=0) dO^* \\
&=w_1 N \int S_0(O^*)f(O^*,\text{alive at }t=0) dO^* \\
&\quad + c_0w_0\frac{N}{c_0}\int S_0(O^*)f(O^*,\text{die at }t=0) dO^*\\
&\quad + Jc_0m_0\frac{N}{N-c_0}\int S_0(O^*)f(O^*, \text{alive at }t=0) dO^*\\
& = w_0 N \int S_0(O^*)f(O^*)dO^* \\
&= w_0 N E_0^*\{S_0(O^*)\}.
\end{align*}
}
If we let $w_0 = N^{-1}$, then $E\{S(O)\}=E_0^*\{S_0(O^*)\}$. But if we only need $E\{S(O)\}=w_0NE_0^*\{S_0(O^*)\}=0$, then it is not necessary to set $w_0=N^{-1}$.
\end{proof}

\section{Proof of Theorem \ref{thm:1}} \label{appd:thm_proof}
We have already known that the ICE estimator in Algorithm \ref{alg:ice_full} is consistent given that the models are correctly specified. To show that the estimator in Algorithm \ref{alg:case_control} to be consistent, we only need to prove that at each step, the expectation of the new estimating equation based on the case--control sample is zero if and only if the expectation of the original estimating equation is zero.  
\begin{proof}
Let
\begin{align*}
S(O) &= c_{k-1}w_{k-1}S(O_{k-1,c})+ K_{k-1}c_{k-1}\ell_{k-1} S(O_{k-1,s}) \\
&\quad +J_{k-1}c_{k-1}m_{k-1} S(O_{k-1,m}) +\cdots \\
&\quad +c_0w_0 S(O_{0c})+ K_0c_0\ell_0 S(O_{0s}) + J_0c_0m_0 S(O_{0m}),
\end{align*}
then
{\small
\allowdisplaybreaks 
\begin{align*}
&E\{S(O)\}\\
&=c_{k-1} w_{k-1}\int S_0(O^*)f(O^*\mid \text{die at }t=k-1) dO^* \\
&\quad + K_{k-1}c_{k-1}\ell_{k-1}\int S_0(O^*)f(O^*\mid \text{censored at }t=k-1) dO^* \\
&\quad +J_{k-1}c_{k-1} m_{k-1} \int S_0(O^*)f(O^*\mid \text{alive at }t=k-1) dO^* + \cdots \\
&\quad + c_0w_0 \int S_0(O^*)f(O^*\mid \text{die at }t=0) dO^* \\
&\quad +K_0c_0\ell_0 \int S_0(O^*)f(O^*\mid \text{censored at }t=0) dO^* \\
&\quad +J_0c_0m_0 \int S_0(O^*)f(O^*\mid \text{alive at }t=0) dO^* \\
&= \frac{c_{k-1}w_{k-1}}{\pr(\text{die at }t=k-1)}\int S_0(O^*)f(O^*,\text{die at }t=k-1) dO^* \\
&\quad + \frac{K_{k-1}c_{k-1}\ell_{k-1}}{\pr(\text{censored at }t=k-1)}\int S_0(O^*)f(O^*,\text{censored at }t=k-1) dO^* \\
&\quad + \frac{J_{k-1}c_{k-1}m_{k-1}}{\pr(\text{alive at }t=k-1)}\int S_0(O^*)f(O^*, \text{alive at }t=k-1) dO^* + \cdots\\
&\quad + \frac{c_0w_0}{\pr(\text{die at }t=0)}\int S_0(O^*)f(O^*,\text{die at }t=0) dO^* \\
&\quad + \frac{K_0c_0\ell_0}{\pr(\text{censored at }t=0)}\int S_0(O^*)f(O^*, \text{censored at }t=0) dO^*\\
&\quad + \frac{Jc_0m_0}{\pr(\text{alive at }t=0)}\int S_0(O^*)f(O^*, \text{alive at }t=0) dO^* \\
&= c_{k-1}w_{k-1}\frac{N}{c_{k-1}}\int S_0(O^*)f(O^*,\text{die at }t=k-1) dO^* \\
&\quad + K_{k-1}c_{k-1}\ell_{k-1}\frac{N}{s_{k-1}}\int S_0(O^*)f(O^*,\text{censored at }t=k-1) dO^*\\
&\quad + J_{k-1}c_{k-1}m_{k-1}\frac{N}{N-\sum_{j=0}^{k-1}c_j - \sum_{j=0}^{k-1}s_j}\int S_0(O^*)f(O^*, \text{alive at }t=k-1) dO^* \\
&\quad + \cdots + c_0w_0\frac{N}{c_0}\int S_0(O^*)f(O^*,\text{die at }t=0) dO^* \\
&\quad + K_0c_0\ell_0\frac{N}{s_0}\int S_0(O^*)f(O^*, \text{censored at }t=0) dO^*\\
&\quad + J_0c_0m_0\frac{N}{N-c_0-s_0}\int S_0(O^*)f(O^*, \text{alive at }t=0) dO^* \\
&=w_{k-1} N \int S_0(O^*)f(O^*,\text{die at }t=k-1) dO^* \\
&\quad + w_{k-1} N \int S_0(O^*)f(O^*,\text{censored at }t=k-1) dO^* \\
&\quad + w_{k-1} N \int S_0(O^*)f(O^*,\text{alive at }t=k-1) dO^* + \cdots \\
&\quad + w_0N\int S_0(O^*)f(O^*,\text{die at }t=0) dO^* \\
&\quad + w_0N\int S_0(O^*)f(O^*,\text{censored at }t=0) dO^* \\
&\quad + Jc_0m_0\frac{N}{N-c_0}\int S_0(O^*)f(O^*, \text{alive at }t=0) dO^*\\
&= w_{k-1} N \int S_0(O^*)f(O^*,\text{alive at }t=k-2) dO^* \\
&\quad +  w_{k-2} N \int S_0(O^*)f(O^*,\text{die at }t=k-2) dO^*\\
&\quad +  w_{k-2} N \int S_0(O^*)f(O^*,\text{censored at }t=k-2) dO^* \\
&\quad + (w_{k-2}-w_{k-1})N\int S_0(O^*)f(O^*,\text{alive at }t=k-2) dO^* \\
&\quad + \cdots + w_0N\int S_0(O^*)f(O^*,\text{die at }t=0) dO^* \\
&\quad + w_0N\int S_0(O^*)f(O^*,\text{censored at }t=0) dO^* \\
&\quad + (w_{0}-w_{1})N \int S_0(O^*)f(O^*, \text{alive at }t=0) dO^*\\
& = w_0 N \int S_0(O^*)f(O^*)dO^* \\
&= w_0 N E_0^*\{S_0(O^*)\}
\end{align*}
}
Therefore, $E\{S(O)\}=0$ if and only if $E_0^*\{S_0(O^*)\}=0$.
\end{proof}

\section{More Numerical Examples} \label{appd:more_sim}

Here we report the high-prevalence scenario.
\begin{equation*}
    Y_t\sim \mathrm{Ber}\bigl(\mathrm{expit}(-4+2A_t +  {\beta}^\top  {L}_{t-1})\bigr),
\end{equation*}
yielding a prevalence rate of about $6\%$. Results are shown in Table~\ref{tab:sim:summary_time_high} and Figure~\ref{fig:sim:boxplot_risk_high}. Reference R code that reproduces the simulation results is publicly available at \url{https://github.com/XiaohuiYin1998/MatchedGFormula}.

\begin{table}[htbp]
  \centering
  \caption{Simulation: Summary of the computation time for high prevalence scenario (in secs); standard deviations in parentheses}
  \vspace{2mm}
   \scriptsize
  \label{tab:sim:summary_time_high}
  \begin{tabular}{lrrrr}
    \toprule
    & NICE, $\bar{A}=\bar{0}$ & ICE, $\bar{A}=\bar{0}$ & NICE, $\bar{A}=\bar{1}$ & ICE, $\bar{A}=\bar{1}$\\
    \midrule
    \addlinespace
    Complete & 11.06 (2.33) & 4.44 (2.44) & 11.22 (2.60) & 4.01 (2.26)\\
    \addlinespace
    \multicolumn{5}{l}{(\textit{model-fitting time})} \\
    Case--control (J=20) & 12.38 (2.35) & 5.21 (1.94) & 12.41 (2.54) & 4.79 (1.88)\\
    Case--control (J=10) & 8.30 (1.99) & 3.26 (1.63) & 8.27 (1.83) & 3.02 (1.59)\\
    Case--control (J=5) & 7.31 (1.76) & 2.44 (1.30) & 7.30 (1.70) & 2.33 (1.40)\\
    \addlinespace
    \multicolumn{5}{l}{(\textit{total runtime})} \\
    Case--control (J=20) & 14.20 (2.94) & 6.49 (2.72) & 14.25 (3.09) & 6.08 (2.64)\\
    Case--control (J=10) & 9.96 (2.60) & 4.31 (2.27) & 9.93 (2.48) & 4.06 (2.23)\\
    Case--control (J=5) & 8.94 (2.42) & 3.36 (1.98) & 8.91 (2.28) & 3.25 (2.24)\\
    \bottomrule
  \end{tabular}
  \\[2pt]
  {\footnotesize Remark: Since the prevalence is approximately 5--8\%, when $J=20$
  the sample size exceeded the original sample sizes, and consequently
  the time also exceeded that of the complete-data version.}
\end{table}

\begin{figure}[!htbp]
    \centering
    \begin{subfigure}[t]{0.5\textwidth}
        \centering
        \includegraphics[width=\linewidth]{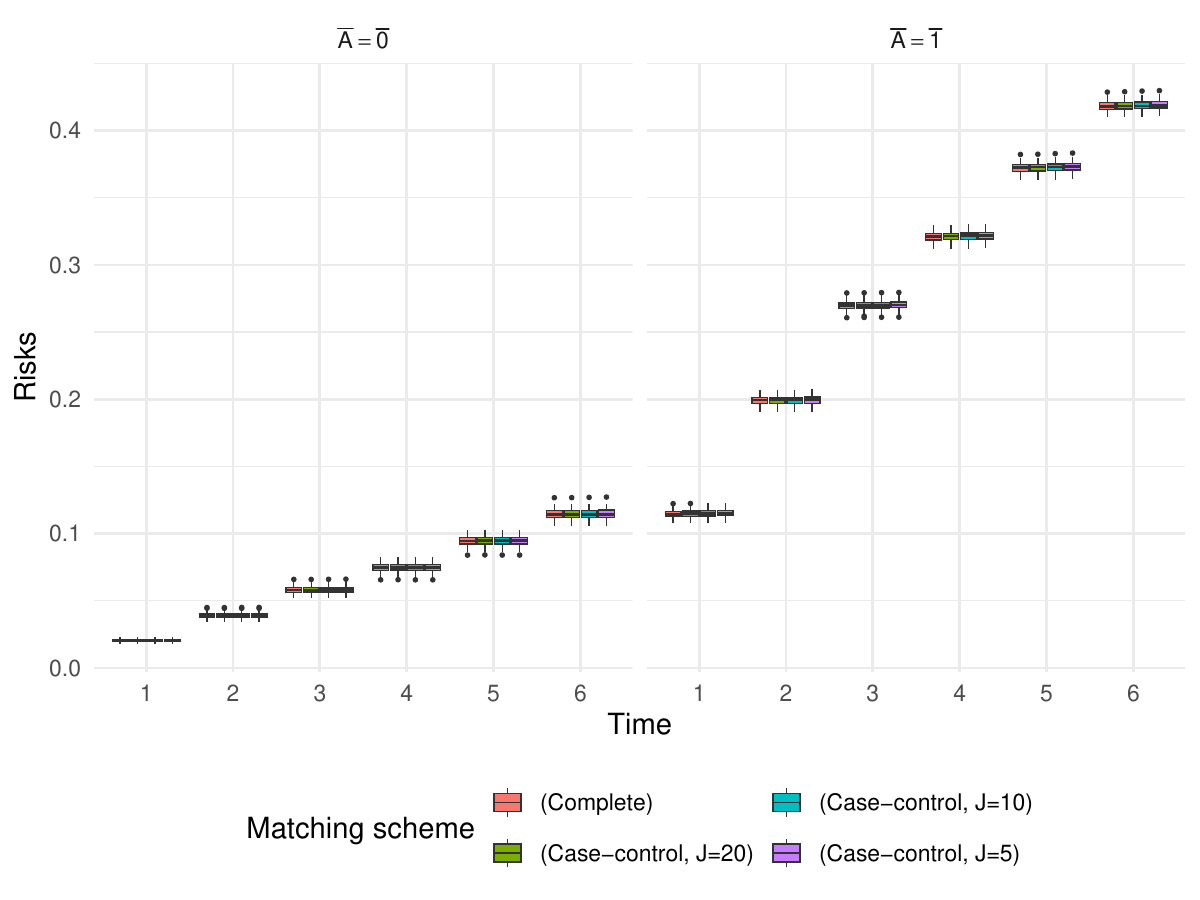}
        \caption{ICE}
    \end{subfigure}
    \begin{subfigure}[t]{0.5\textwidth}
        \centering
        \includegraphics[width=\linewidth]{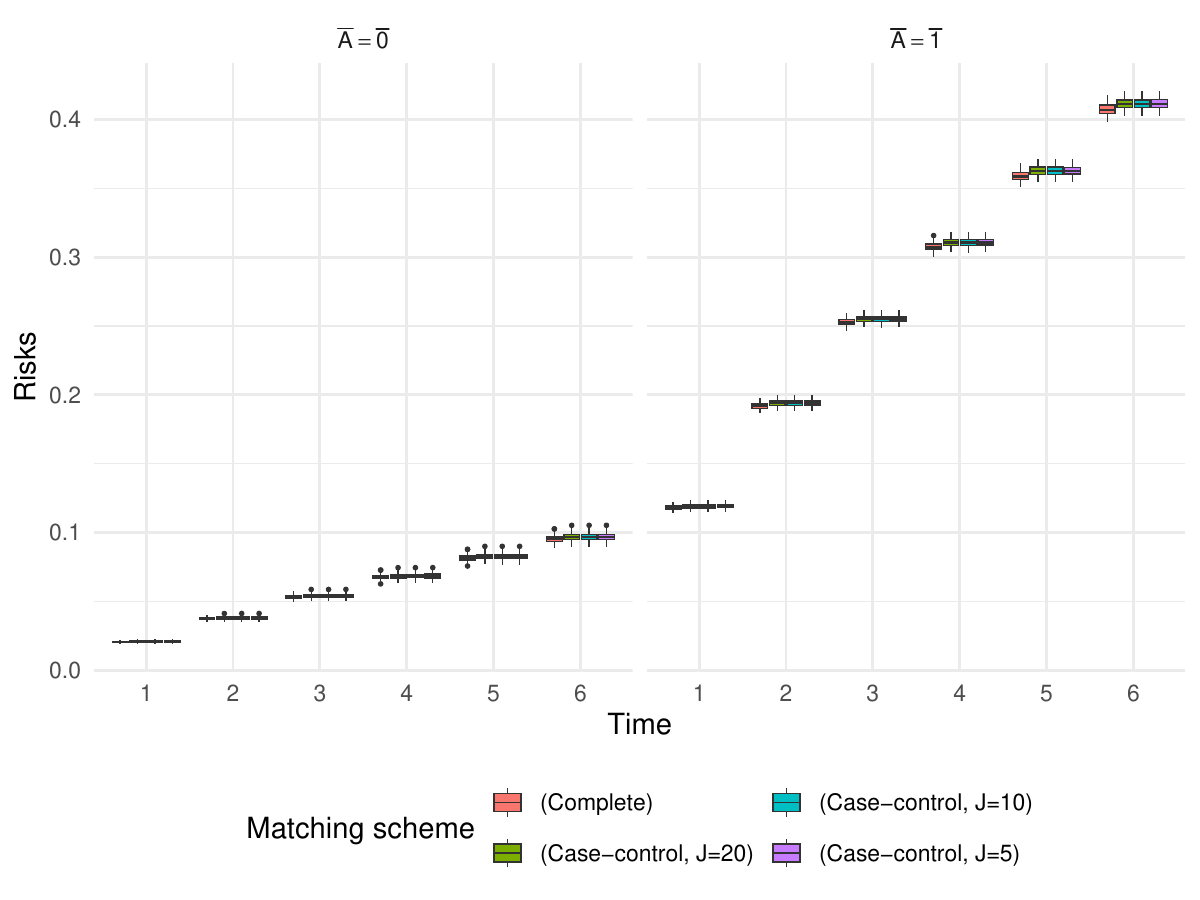}
        \caption{NICE}
    \end{subfigure}%
    \caption{Simulation: Risk estimates for always or never exposed to treatment for high prevalence scenario.}
    \label{fig:sim:boxplot_risk_high}
\end{figure}

\section{SBDH, Clinical Comorbidities, and Mental Health Disorders }
\label{sbdh_list}
\textbf{Structured SBDH categories}. We used ICD-10-CM Z codes \footnote{https://www.bcbsnm.com/docs/provider/nm/icd-10-z-codes.pdf} to extract information about 10 structured SBDH categories -
\begin{enumerate}
    \item Problems related to education and literacy (\textbf{Z55}),
    \item Problems related to employment and unemployment (\textbf{Z56}),
    \item Occupational exposure to risk factors (\textbf{Z57}),
    \item Problems related to housing and economic circumstances (\textbf{Z59}),
    \item Problems related to social environment (\textbf{Z60}),
    \item Problems related to upbringing (\textbf{Z62}),
    \item Other problems related to primary support group, including family circumstances (\textbf{Z63}),
    \item Problems related to certain psychosocial circumstances (\textbf{Z64}),
    \item Problems related to other psychosocial circumstances (\textbf{Z65}), and
    \item Problems related to medical facilities and other health care (\textbf{Z75})
\end{enumerate}

\textbf{NLP-extracted SBDH categories}. We considered 12 NLP-extracted SBDH categories - Social isolation, job or financial insecurity, housing instability, legal problems, violence, barriers to care, transition of care, food insecurity, psychiatric symptoms, substance abuse, pain, and patient disability. Definitions and sample examples of all SBDH categories as well as the deep-learning model used to extract this SBDH information are available in the appendix section of the study by Mitra et al. \citep{mitra2023associations}.

\textbf{Clinical comorbidities and mental health disorders}. From the Charlson Comorbidity Index \citep{quan2011updating}, we included 17 clinical comorbidities: acute myocardial infarction, congestive heart failure, peripheral vascular disease, cerebrovascular disease, dementia, chronic obstructive pulmonary disease, rheumatoid disease, peptic ulcer disease, mild liver disease, diabetes without complications, diabetes with complications, hemiplegia or paraplegia, kidney disease, cancer, moderate or severe liver disease, metastatic solid tumor, and AIDS/HIV. We considered 7 mental health disorders \citep{blosnich2020social}: major depressive disorder, alcohol use disorder, drug use disorder, anxiety disorder, posttraumatic stress disorder, schizophrenia, and bipolar disorder.

\begin{table}[!ht]
  \centering
  \caption{Application: Baseline summary statistics of the full VHA cohort ($N=127{,}399$ veterans, $440$ suicide cases). The restricted analysis cohort in Table~\ref{tab:va:summary_table_va} is obtained from this full cohort by excluding the initial 6-month windows with no suicide deaths and analyzing the subsequent five 6-month follow-up intervals}
  \vspace{2mm}
  \label{tab:va:base_cohort_stats}
  \begin{tabular}{llll}\toprule
    \multicolumn{2}{c}{Variables} & \makecell[cl]{Case\\(n=440), \%}  & \makecell[cl]{Control\\(n=126,959), \%} \\
    \midrule
    \multirow{3}{*}{Race}
                                  & White & 369 (83.86) & 81,443 (64.15) \\
                                  & Black & 43 (9.77) & 34,536 (27.20) \\
                                  & Others & 28 (6.36) & 10,980 (8.65)\\
    \cdashline{1-4}
    \multirow{2}{*}{Sex}
                                  & Male & 403 (91.59) & 113,690 (89.55)\\
                                  & Female & 37 (8.41) & 13,269 (10.45)\\
    \cdashline{1-4}
    \multirow{7}{*}{Age}
                                  & 18-29 & 74 (16.82) & 11,956 (9.42)\\
                                  & 30-39 & 97 (22.05) & 23,731 (18.69)\\
                                  & 40-49 & 78 (17.73)  & 18,918 (14.90)\\
                                  & 50-59 & 94 (21.36) & 34,597 (27.25)\\
                                  & 60-69 & 73 (16.59) & 29,067 (22.89)\\
                                  & 70-79 & 22 (5.00) & 6,625 (5.22)\\
                                  & 79$<$ & 2 (0.45)  & 2,065 (1.63)\\
    \cdashline{1-4}
    \multirow{3}{*}{Marital Status}
                                  & Married & 71 (16.14) & 11,722 (9.23)\\
                                  & Not married & 163 (37.05) & 25,031 (19.72) \\
                                  & Unknown & 206 (46.82) & 90,206 (71.05)\\
    \bottomrule
  \end{tabular}
\end{table}

\end{document}